\documentclass[showpacs,amsmath,amssymb,twocolumn,pra]{revtex4-1}

\usepackage[dvips]{graphicx} 
\usepackage{amsfonts}
\usepackage{amssymb}
\usepackage{amscd}
\usepackage{amsmath}    
\usepackage{enumerate}
\usepackage{epsfig}
\usepackage{subfigure}
\usepackage{bm}
\usepackage{xcolor}
\usepackage{amsthm}
\usepackage{framed}
\usepackage{multirow}
\usepackage{mathrsfs,amssymb}
\usepackage{latexsym} 
\usepackage{framed}
\usepackage{physics}
\usepackage{epstopdf}
\usepackage[skins]{tcolorbox}
\newtcolorbox[auto counter]{tbox}[2][]{%
	enhanced, float=hbt, drop fuzzy shadow southeast,
	colback=white!5!white, colframe=white!50!black,
	width= .97\columnwidth,sharp corners, boxrule=0.8pt,
	title={Table \thetcbcounter: #2}, #1
}
\usepackage[linesnumbered,boxed,ruled,vlined,rightnl]{algorithm2e}
\usepackage{qcircuit}

\newtheorem{lemma}{Lemma}
\newtheorem{proposition}{Proposition}

\begin{document}
\title{A hybrid algorithm simulating non-equilibrium steady states of an open quantum system}
\begin{abstract}
Non-equilibrium steady states are a focal point of research in the study of open quantum systems. Previous variational algorithms for searching these steady states have suffered from resource-intensive implementations due to vectorization or purification of the system density matrix, requiring large qubit resources and long-range coupling. In this work, we present a novel variational quantum algorithm that efficiently searches for non-equilibrium steady states by simulating the operator-sum form of the Lindblad equation.
By introducing the technique of random measurement, we are able to estimate the nonlinear cost function while reducing the required qubit resources by half compared to previous methods. Additionally, we prove the existence of the parameter shift rule in our variational algorithm, enabling efficient updates of circuit parameters using gradient-based classical algorithms.
To demonstrate the performance of our algorithm, we conduct simulations for dissipative quantum transverse Ising and Heisenberg models, achieving highly accurate results. Our approach offers a promising solution for effectively addressing non-equilibrium steady state problems while overcoming computational limitations and implementation challenges.
\end{abstract}

\author{Hongyi Zhou}
\email{zhouhongyi@ict.ac.cn}
\affiliation{State Key Lab of Processors, Institute of Computing Technology, Chinese Academy of Sciences, 100190, Beijing, China.}

\author{Rui Mao}
\affiliation{State Key Lab of Processors, Institute of Computing Technology, Chinese Academy of Sciences, 100190, Beijing, China.}

\author{Xiaoming Sun}
\affiliation{State Key Lab of Processors, Institute of Computing Technology, Chinese Academy of Sciences, 100190, Beijing, China.}

\maketitle

\section{Introduction}



Unlike ideal closed quantum systems, realistic quantum systems are inevitably coupled to the environment during their evolution, resulting in dissipation. These quantum systems are referred to as open quantum systems. The evolution of an open system cannot be described solely by the Schrödinger equation. Instead, it is usually described by the Lindblad master equation \cite{gorini1976completely}, assuming a memoryless environment where the correlation time of the environment is negligible compared to the characteristic timescale of system-environment interactions. A significant task in the study of open quantum systems is to search for the non-equilibrium steady states, which represent solutions to the Lindblad master equation. Non-equilibrium steady states have been extensively studied in statistical mechanics \cite{zia2007probability,freitas2022emergent}, biology \cite{sorrenti2017non}, and even quantum information processing tasks \cite{verstraete2009quantum,harrington2022engineered} due to their special transport properties. In quantum computation, dissipation engineering allows the preparation of graph states, which serve as resources for measurement-based quantum computation \cite{verstraete2009quantum}.

In the literature, various classical algorithms have been developed to search for non-equilibrium steady states. These include matrix product state and tensor network schemes \cite{cui2015variational,kshetrimayum2017simple}, real-space renormalization approaches \cite{finazzi2015corner}, and cluster mean-field approaches \cite{jin2016cluster}. For Hamiltonian systems, quantum Monte Carlo methods have been employed to stochastically sample system properties. Two promising approaches in this regard are versatile projector Monte Carlo techniques \cite{umrigar2015observations} and the variational Monte Carlo method \cite{ido2015time}. The latter can also be combined with neural network ansatz \cite{vicentini2019variational}.

Simulating large quantum systems using classical algorithms is challenging due to computational limitations. Quantum algorithms face implementation difficulties as well, mainly related to integrating a large number of qubits and mitigating decoherence effects. In contrast, hybrid quantum-classical algorithms, which only require shallow circuits without error correction, can be easily implemented on current noisy intermediate-scale quantum (NISQ) devices \cite{preskill2018quantum}. Variational quantum algorithms, an important subclass of hybrid quantum-classical algorithms, transform problems into circuit parameter optimization and updates, showing advantages in quantum chemistry \cite{mcardle2020quantum} and combinatorial optimization \cite{farhi2014quantum}.

Recently, some variational quantum algorithms for searching non-equilibrium steady states was proposed \cite{yoshioka2020variational,liu2021variational}. One way is to consider the vectorization form of the Lindblad master equation \cite{yoshioka2020variational}, where the density matrix of the open system is vectorized, and the system's evolution can be described by a single matrix called the Lindbladian, analogous to the Hamiltonian in the Schrödinger equation. The other way is purifying the open system with environment \cite{liu2021variational}.
By constructing appropriate cost functions, the steady state can be obtained using a process similar to the variational quantum eigensolver (VQE). However, both methods require at least twice the number of qubits of the system size and involve long-range coupling, making the physical implementations quite challenging. An open question is whether the number of qubits can be reduced while achieving the same task.

In our work, we propose a novel variational algorithm for searching non-equilibrium steady states. The chosen cost function is the squared Frobenius norm of the operator-sum form of Lindblad equation, which is a quadratic function of the density matrix of the mixed state. By performing random measurements at the end of the variational ansatz, we can estimate the quadratic function by using classical shadows of a single copy of the mixed state, requiring only half the number of qubits compared to \cite{yoshioka2020variational,liu2021variational}. We prove the existence of the parameter shift rule in the variational algorithm, enabling the application of gradient-based classical algorithms to update the circuit parameters. Finally, we present simulation results for a dissipative quantum transverse Ising model and a dissipative Heisenberg model, demonstrating the high accuracy achieved by our algorithm.

\section{Main result}

\subsection{Ansatz for preparing a mixed state}
The ansatz for preparing a mixed state is given in Fig.~\ref{fig:ansatz}.
A general $n$-qubit mixed state $\rho$ can be expressed as a spectral decomposition $\rho=\sum_{b=1}^{2^n} \lambda_b \ket{\psi_b}\bra{\psi_b}$. The idea is to generate the probability distribution $\{\lambda_b\}_b$ and the eigenstates $\{\ket{\psi_b}\}_b$, which are realized by a concatenation of a parameterized unitary gate $U_D(\vec{\theta}_D)$ and an intermediate measurement and another parameterized unitary gate $U_V(\vec{\theta}_V)$, respectively.
The first unitary $U_D(\vec{\theta}_D)$ is for eigenvalue distribution on computational basis $\{\ket{b}\}_b$, i.e., $\ket{D}=U_D(\vec{\theta}_D)\ket{0}^{\otimes n} =\sum_{q=1}^{2^n} \lambda_b \ket{b}$. Then the intermediate measurement will generate a probability distribution, $p_b = |\lambda_b|^2$. One possible implementation $U_D(\vec{\theta}_D)$ of is given in the left of Fig.~\ref{fig:ansatz1}. To achieve a better expressibility, this parameterized circuit can be repeated for several times with independent parameters in each repetition. The number of repetitions is denoted as $D_1$.
The second unitary realizes the basis rotation, after which the state becomes a general mixed state $\rho = \sum_{b=1}^{2^n} p_b \ket{\psi_b}\bra{\psi_b}$, where $\ket{\psi_b} = U_V(\vec{\theta}_V)\ket{b}$. The concrete implementation of $U_V(\vec{\theta}_V)$ is the hardware-efficient ansatz following \cite{yoshioka2020variational}, which is given in the right of Fig.~\ref{fig:ansatz1}. The gates in the dashed box may also be repeated for a better expressibility. The number of repetitions is denoted as $D_2$. The circuit parameters are jointly denoted as $\vec{\theta}=(\vec{\theta}_D,\vec{\theta}_V)$.

At the end of the circuit, a random measurement is performed on the output state $\rho$, which is realized by a random unitary $U_{\mathrm{rand}}$ followed by a projective measurement on the computational basis.
Compared with the ansatz in \cite{PhysRevResearch.2.043289}, the number of qubits is reduced from $2n$ to $n$.
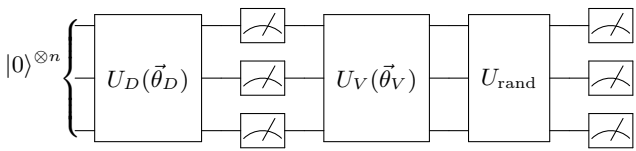
\begin{figure}[h!]
	\centerline{
		\Qcircuit @C = 0.8 em @R =0.75 em  {
				\lstick{}& \multigate{2}{U_D(\vec{\theta}_D)} &\qw        & \meter   &\qw   & \multigate{2}{U_V(\vec{\theta}_V)} & \qw  & \multigate{2}{U_{\rm{rand}}} & \qw & \meter\\
				\lstick{}& \ghost{U_D(\vec{\theta}_D)}        &\qw        & \meter   &\qw   & \ghost{U_V(\vec{\theta}_V)}        & \qw  & \ghost{U_{\rm{rand}}}     & \qw & \meter\\
				\lstick{}& \ghost{U_D(\vec{\theta}_D)}        &\qw        & \meter   &\qw   & \ghost{U_V(\vec{\theta}_V)}        & \qw  & \ghost{U_{\rm{rand}}}     & \qw & \meter
				\inputgroupv{1}{3}{0.4em}{1.6em}{\ket{0}^{\otimes n}}
			}
	}
	\caption{Quantum circuit for the mixed state preparation and random measurements.}
	\label{fig:ansatz}
\end{figure}

\begin{figure*}[!ht]
    \begin{minipage}[t]{0.49\textwidth}
    \centerline{
        \Qcircuit @C = 0.8 em @R =0.75 em  {
                \lstick{}& \gate{R_y} & \ctrl{1}
                &\qw &\qw
                \\
                \lstick{}& \gate{R_y} & \gate{R_y} 
               & \ctrl{1} &\qw
               \\ 
                \lstick{}& \gate{R_y} & \qw 
                & \gate{R_y}  &\qw
               \gategroup{1}{2}{3}{4}{1em}{--}
            }
    }
    \end{minipage}
    \begin{minipage}[t]{0.49\textwidth}
    \centerline{
        \Qcircuit @C = 0.8 em @R =0.75 em  {
                & \lstick{}& \gate{R_y} & \gate{R_z} 
                &\ctrl{1} & \qw  &\qw  &\gate{R_y}              & \gate{R_z} & \qw 
                \\
               & \lstick{}& \gate{R_y} & \gate{R_z} 
               & \control \qw & \ctrl{1}       &\qw     &\gate{R_y}               & \gate{R_z} & \qw 
               \\ 
            & \lstick{}& \gate{R_y} & \gate{R_z} 
                & \qw &  \control \qw &       \qw &       \gate{R_y}             & \gate{R_z} &\qw 
               \gategroup{1}{3}{3}{6}{1em}{--}
            }
    }
    \end{minipage}
    \caption{The ansatz for eigenvalue distribution $U_D(\vec{\theta}_D)$ (left) and basis rotation $U_V(\vec{\theta}_V)$ (right). Each gate is parameterized by a single parameter. For a better expressibility, the structures in the dashed boxed can be repeated for $D_1$ and $D_2$ times, respectively.} 
    \label{fig:ansatz1}
\end{figure*}
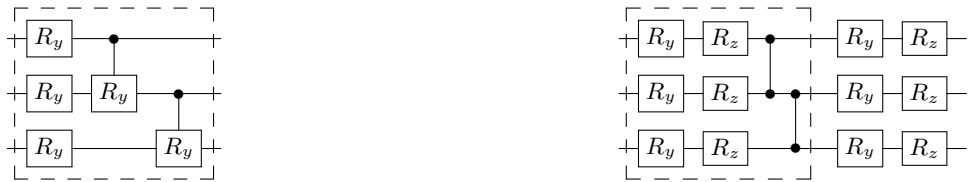

\subsection{Cost function}
The evolution of an open system coupled with a memoryless environment follows the Lindblad master equation,
\begin{equation}\label{eq:lindblad}
	\frac{d\rho}{dt}={\cal L}(\rho) = -\mathrm{i}[H,\rho] + \sum_{k=1}^K \frac{\gamma_k}{2}\left(2c_k \rho c_k^\dag - \{c_kc_k^\dag, \rho\} \right),
\end{equation}
where $c_k$ is the $k$-th jump operator that determines the dissipation and $\gamma_k$ is the strength of the dissipation. The steady states $\rho_{\mathrm{ss}}$ will satisfy
\begin{equation}
	{\cal L}(\rho_{\mathrm{ss}}) = 0.
\end{equation}
The search for the steady states can be transformed into an optimization problem. The cost function $C(\vec{\theta})$ should satisfy
\begin{equation}\label{eq:condition_costfunction}
	\begin{aligned}
		C(\vec{\theta}) & \geq 0, \quad \forall \vec{\theta}                                              \\ 
		C(\vec{\theta}) & =0 , \quad \text{iff.}  \,\, {\cal L}(\rho(\vec{\theta}))=0. \\
	\end{aligned}
\end{equation}
Then an arbitrary matrix norm will satisfy the conditions in Eq.~\eqref{eq:condition_costfunction}. We choose the square of the Frobenius norm $||A||_F^2$ where $||A||_F =\sqrt{\mathrm{tr}(A^\dag A)}$ as the cost function,
\begin{equation}\label{eq:def_costfunction}
	\begin{aligned}
		C(\vec{\theta}) := ||{\cal L}(\rho)||_F^2.
	\end{aligned}
\end{equation}
Then the cost function in Eq.~\eqref{eq:def_costfunction} satisfies Eq.~\eqref{eq:condition_costfunction} by definition. In the variational algorithm, the searching process is to minimize Eq.~\eqref{eq:def_costfunction} iteratively until the cost function converges to zero. Then the steady state is found.

\subsection{Calculating the cost function}

The cost function in Eq.~\eqref{eq:def_costfunction} is a quadratic function of a $n$-qubit quantum state $\rho$, which is conventionally obtained by preparing two copies $\rho\otimes \rho$ and performing the swap test.
Here we consider an alternative approach of shadow tomography on only one copy of $\rho$. In shadow tomography, one performs random measurement and obtain a number of classical snapshots of $\rho$, denoted as \cite{Huang2020PredictingMP}
\begin{equation}
	\hat{\rho} = {\cal M}^{-1} \left(U^\dag \ket{b}\bra{b} U\right),
\end{equation}
where $U$ is a random unitary chosen from an ensemble ${\cal U}$, $\ket{b}\bra{b}$ is a projective measurement on the computational basis, and ${\cal M}^{-1}$ is the inverse channel (realized by post-processing) depending on the ensemble ${\cal U}$ (typically random Pauli gates or Clifford gates). Then the expectation is calculated by replacing $\rho$ with its snapshots $\hat{\rho}$ and take an expectation over all possible $U$ and $\ket{b}$.
In our case, we only care about quadratic functions of $\rho$ in the form of $\mathrm{tr}(O_1 \rho O_2 \rho)$. We consider the empirical mean value
\begin{equation}\label{eq:ube}
	\begin{aligned}
		\frac{1}{N(N-1)} \sum_{i \neq {i^\prime}}  \mathrm{tr}(O_1 \hat{\rho}_i O_2 \hat{\rho}_{i^\prime}),
	\end{aligned}
\end{equation}
which $N$ is the number of random unitary gates, $i$ is the label of the unitary gate, and $\hat{\rho}_i$ is the classical shadow defined in Eq.~\eqref{eq:classicalshadow}. Then Eq.~\eqref{eq:ube} is an unbiased estimator of $\mathrm{tr}(O_1 \rho O_2 \rho)$, i.e.,
\begin{equation}
	\mathbb{E}\left[\frac{1}{N(N-1)} \sum_{i \neq {i^\prime}}  \mathrm{tr}(O_1 \hat{\rho}_i O_2 \hat{\rho}_{i^\prime})\right] = \mathrm{tr}(O_1 \rho O_2 \rho),
\end{equation}
where the expectation without subscripts is taken over all unitary gates in $\cal{U}$ and measurement outcomes throughout the paper.
We refer to Appendix~\ref{app:ube} for the proof.

The cost function is a linear combination of terms in the form of $\mathrm{tr}(O_1 \rho O_2 \rho)$, which can be estimated by
\begin{widetext}
	\begin{equation}\label{shadow_estimation}
		\begin{aligned}
			C(\vec{\theta}) & = \frac{1}{N(N-1)}\sum_{i\neq {i^\prime}} \mathrm{tr}\left(-\mathrm{i}H\hat{\rho}_{i}+ \mathrm{i}\hat{\rho}_{i} H +\sum_{k=1}^K \frac{\gamma_k}{2}(2c_k \hat{\rho}_{i} c_k^\dag-c_kc_k^\dag\hat{\rho}_{i}-\hat{\rho}_{i} c_kc_k^\dag) \right) \\
			                & \times \left(\mathrm{i}H\hat{\rho}_{i^\prime}-\mathrm{i}\hat{\rho}_{i^\prime} H +\sum_{k=1}^K \frac{\gamma_k}{2}(2c_k^\dag \hat{\rho}_{i^\prime} c_k-\hat{\rho}_{j} c_kc_k^\dag-c_kc_k^\dag\hat{\rho}_{i^\prime})\right).
		\end{aligned}
	\end{equation}
\end{widetext}
The operators $O_1$ and $O_2$ in the shadow estimation are typically Hermitian. While we notice that in Eq.~\eqref{shadow_estimation}, $O_1$ and $O_2$ are not Hermitian for the terms with $c_k \hat{\rho} c_k^\dag$ and $c_k^\dag \hat{\rho} c_k$. To estimate these terms, we apply the following lemma.
\begin{lemma}
	Suppose $\Phi$ is a linear map on the set of density matrix, $\Phi(\rho) = \sigma$, and $H_1$ and $H_2$ are two Hermitian operators. Then $\mathbb{E}\left(\mathrm{tr}(H_1 \Phi(\hat{\rho}_i)H_2 \hat{\rho}_{i^\prime}) \right)= \mathrm{tr}(H_1 \sigma H_2 \rho) $.
\end{lemma}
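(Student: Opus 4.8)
The plan is to reduce the claim to two elementary properties of the classical shadows $\hat{\rho}_i$, together with the linearity of $\Phi$. The first property is \emph{unbiasedness}: by construction $\hat{\rho}_i = {\cal M}^{-1}(U_i^\dag \ket{b_i}\bra{b_i} U_i)$ satisfies $\mathbb{E}[\hat{\rho}_i] = \rho$, which is the defining feature of the shadow estimator. The second is \emph{independence}: two shadows with distinct labels $i \neq i^\prime$ are built from independently sampled unitaries $U_i, U_{i^\prime} \in {\cal U}$ and independent measurement outcomes $b_i, b_{i^\prime}$, so $\hat{\rho}_i$ and $\hat{\rho}_{i^\prime}$ are statistically independent random matrices. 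These are exactly the ingredients already used to establish the unbiasedness of Eq.~\eqref{eq:ube}.

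First I would push the expectation through $\Phi$. Since $\Phi$ is linear and $\mathbb{E}$ is linear, they commute, giving $\mathbb{E}[\Phi(\hat{\rho}_i)] = \Phi(\mathbb{E}[\hat{\rho}_i]) = \Phi(\rho) = \sigma$. Next I would write the trace as a bilinear form in the two shadow matrices: in index notation $\mathrm{tr}(H_1 A H_2 B) = \sum_{jklm}(H_1)_{jk}A_{kl}(H_2)_{lm}B_{mj}$, which is manifestly linear in $A = \Phi(\hat{\rho}_i)$ and separately linear in $B = \hat{\rho}_{i^\prime}$. The independence of the two shadows then lets each joint expectation $\mathbb{E}[A_{kl}B_{mj}]$ factor as $\mathbb{E}[A_{kl}]\,\mathbb{E}[B_{mj}]$.

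Collecting these factors reassembles the trace with each shadow replaced by its mean, so that
\begin{equation}
\mathbb{E}\!\left[\mathrm{tr}\!\left(H_1 \Phi(\hat{\rho}_i) H_2 \hat{\rho}_{i^\prime}\right)\right]
= \mathrm{tr}\!\left(H_1\,\mathbb{E}[\Phi(\hat{\rho}_i)]\,H_2\,\mathbb{E}[\hat{\rho}_{i^\prime}]\right)
= \mathrm{tr}(H_1 \sigma H_2 \rho),
\end{equation}
which is the desired identity. Note that Hermiticity of $H_1$ and $H_2$ is not actually needed for the derivation; the argument holds for arbitrary fixed operators, and the assumption merely reflects the intended application in Eq.~\eqref{shadow_estimation}.

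The main obstacle, such as it is, lies entirely in justifying the factorization step: one must confirm that the independence of shadows carrying distinct indices $i \neq i^\prime$ genuinely permits separating $\mathbb{E}[A_{kl}B_{mj}]$ into a product, and that this factorization is compatible with the subsequent contraction of indices against $H_1$ and $H_2$. Since the statement is purely about the mean, no concentration or variance estimates are required here; bounds on the fluctuation of the associated $U$-statistic would enter only in a sample-complexity analysis, which is separate from this lemma.
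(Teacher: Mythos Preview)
Your argument is correct and matches the paper's proof exactly: both rely on the independence of $\hat{\rho}_i$ and $\hat{\rho}_{i^\prime}$ for $i\neq i^\prime$ to factor the expectation, then invoke linearity of $\Phi$ together with $\mathbb{E}[\hat{\rho}_i]=\rho$. Your explicit index-notation justification of the factorization and your remark that Hermiticity of $H_1,H_2$ is unnecessary are the only additions beyond the paper's terse three-line derivation.
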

\begin{proof}
	We have the following relations
	\begin{equation}
		\begin{aligned}
			\mathbb{E}\left(\mathrm{tr}(H_1 \Phi(\hat{\rho}_i)H_2 \hat{\rho}_j) \right) & =\mathrm{tr}\left(H_1 \Phi(\mathbb{E}(\hat{\rho}_i) ) H_2 \mathbb{E}(\hat{\rho}_{i^\prime})\right) \\
	     & = \mathrm{tr}\left(H_1 \Phi(\rho) H_2 \rho \right)                                                 \\
 & = \mathrm{tr}\left(H_1 \sigma H_2 \rho \right),
		\end{aligned}
	\end{equation}
	where the first equality comes from the expectation of independent random variables. The classical shadows $\hat{\rho}_i$ and $\hat{\rho}_{i^\prime}$ are independent as long as $i \neq j$.
\end{proof}
With this lemma, we can deal with $c_k \rho c_k^\dag$ in Eq.~\eqref{eq:lindblad} as a linear map of $\rho$. For the terms where $O_1$ and $O_2$ are not Hermitian, one can still calculate $\mathrm{tr}(O_1 \hat{\rho}_i O_2 \hat{\rho}_{i^\prime})$, which is an unbiased estimator for 
$\mathrm{tr}(O_1 \rho O_2 \rho)$.

For an arbitrary physical observable $O$, the expectation $\mathrm{tr}(\rho O)$ can be directly estimated with the classical shadows of $\rho$. The number of measurements is estimated in Appendix~\ref{app:number}. This simplifies the measurement process in \cite{yoshioka2020variational} where a stochastic sampling is required to estimate $p_b$ and the observable is individually measured in each $\ket{\psi_b}$.
The shadow estimation can be further combined with the virtual distillation technique during the optimization process, which is discussed in Appendix~\ref{app:virtualdistillation} in detail.

\subsection{Updating parameters}
Gradient-based methods are widely applied to updating parameters in variational algorithms. A widely applied approach to calculate the gradients is the parameter-shift rule. Suppose each unitary gate $U_i$ is parameterized by a single parameter $\theta_i\in \vec{\theta}$, i.e., the whole unitary operation is given by $U=U_n(\theta_n)U_{n-1}(\theta_{n-1})\cdots U_1(\theta_1)$. We prove that the gradient of the cost function $C(\vec{\theta})$ can be calculate by the parameter-shift rule.


\begin{lemma}\label{lem:param_shift_iff}
	Suppose the cost function $C$ can be represented as a Fourier integral
	$\int _{\mathbb{R}{^{l}}}\hat{C}( \vec{\omega} ) e^{\mathrm{i}\vec{\omega} \cdot \vec{\theta} } d\vec{\theta} $,
	then there exists a "parameter shift rule" with respect to fixed set of "shift"s $S$ and fixed coefficients $\{c _{\vec{\Delta} } |\vec{\Delta} \in S\}$
	\begin{equation}\label{eq:param_shift}
		C'( \vec{\theta} ) =\sum _{\vec{\Delta} \in S} c _{\vec{\Delta} } C( \vec{\theta} +\vec{\Delta} )
	\end{equation}
	iff. $C$ is a linear combination of finite trigonometric functions
	\begin{equation}\label{eq:trigon_expand}
		C( \vec{\theta} ) =\sum _{1\leqslant k\leqslant L} \beta _{k} e^{\mathrm{i}\vec{\gamma}_k \cdot \vec{\theta} }
	\end{equation}
	where $\beta _{k} \in \mathbb{C}$ and $\vec{\gamma}_k \in \mathbb{R}^{l}$.
\end{lemma}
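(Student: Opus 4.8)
The statement is an equivalence, so I would prove the two implications separately, working throughout in Fourier space, where both operations appearing in the rule act as simple multipliers. The two facts I will lean on are that the derivative $\partial_{\theta_j}$ multiplies $\hat{C}(\vec{\omega})$ by $i\omega_j$, and that translation by $\vec{\Delta}$ multiplies it by $e^{\mathrm{i}\vec{\omega}\cdot\vec{\Delta}}$; here $C'$ denotes the derivative along a fixed coordinate $\theta_j$. For the backward direction ($\Leftarrow$), assume $C(\vec{\theta})=\sum_{k=1}^{L}\beta_k e^{\mathrm{i}\vec{\gamma}_k\cdot\vec{\theta}}$. Then $C(\vec{\theta}+\vec{\Delta})=\sum_k \beta_k e^{\mathrm{i}\vec{\gamma}_k\cdot\vec{\Delta}}e^{\mathrm{i}\vec{\gamma}_k\cdot\vec{\theta}}$ while $\partial_{\theta_j}C=\sum_k \mathrm{i}\gamma_{k,j}\beta_k e^{\mathrm{i}\vec{\gamma}_k\cdot\vec{\theta}}$, so the rule $\partial_{\theta_j}C=\sum_{\vec{\Delta}\in S}c_{\vec{\Delta}}\,C(\vec{\theta}+\vec{\Delta})$ holds for all $\vec{\theta}$ exactly when the coefficients satisfy the finite linear system $\sum_{\vec{\Delta}\in S}c_{\vec{\Delta}}e^{\mathrm{i}\vec{\gamma}_k\cdot\vec{\Delta}}=\mathrm{i}\gamma_{k,j}$ for each $k$. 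Choosing the shifts along the $\theta_j$ direction with equal spacing reduces the system matrix to a Vandermonde matrix in the distinct values $e^{\mathrm{i}\gamma_{k,j}\delta}$, which is nonsingular for generic spacing $\delta$; hence a solution $\{c_{\vec{\Delta}}\}$ always exists.

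For the forward direction ($\Rightarrow$), suppose the rule holds for some fixed finite $S$ and coefficients. Taking Fourier transforms and using the two multiplier facts turns the identity into $\bigl(\mathrm{i}\omega_j-P(\vec{\omega})\bigr)\hat{C}(\vec{\omega})=0$ as tempered distributions, where $P(\vec{\omega}):=\sum_{\vec{\Delta}\in S}c_{\vec{\Delta}}e^{\mathrm{i}\vec{\omega}\cdot\vec{\Delta}}$. Since $|e^{\mathrm{i}\vec{\omega}\cdot\vec{\Delta}}|=1$ for real $\vec{\omega}$, the multiplier $P$ is bounded, $|P(\vec{\omega})|\le\sum_{\vec{\Delta}}|c_{\vec{\Delta}}|=:R$, whereas $\mathrm{i}\omega_j$ is unbounded, so $\mathrm{i}\omega_j-P$ can vanish only on the slab $\{|\omega_j|\le R\}$. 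Dividing by the smooth reciprocal of the multiplier away from its zero set shows $\hat{C}$ vanishes wherever $\mathrm{i}\omega_j\neq P$, i.e. $\operatorname{supp}\hat{C}$ lies in the zero set of the real-analytic function $\mathrm{i}\omega_j-P$. A nonzero real-analytic function has isolated zeros, and together with the boundedness just obtained this confines the support to a finite set of points $\{\vec{\gamma}_k\}$. A distribution supported on finitely many points is a finite combination of Dirac masses and their derivatives, giving $C(\vec{\theta})=\sum_k Q_k(\vec{\theta})\,e^{\mathrm{i}\vec{\gamma}_k\cdot\vec{\theta}}$ for some polynomials $Q_k$, and it remains only to show the $Q_k$ are constants.

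I expect two points to be the real work. First, in several variables the zero set of a single analytic multiplier is not discrete, so the finite-support conclusion does not follow from one coordinate alone; I would instead invoke the rule in each coordinate $\theta_1,\dots,\theta_l$ and intersect the resulting slabs, confining $\operatorname{supp}\hat{C}$ to a bounded box, after which the per-coordinate analyticity pins the spectrum to finitely many points. Second, and more subtle, is ruling out the derivative-of-delta terms, equivalently the non-constant $Q_k$: a polynomial-times-exponential such as $\theta_j e^{\mathrm{i}\vec{\gamma}_k\cdot\vec{\theta}}$ also satisfies an exact finite-difference identity and would spoil the "iff" under the bare distributional hypothesis. Here I would use the concrete regularity of the cost function of Eq.~\eqref{eq:def_costfunction}: since $C=\|\mathcal{L}(\rho(\vec{\theta}))\|_F^2$ with $\rho(\vec{\theta})$ a density matrix produced by the fixed circuit of Fig.~\ref{fig:ansatz} and $\mathcal{L}$ a fixed bounded superoperator, one has $C(\vec{\theta})\le\|\mathcal{L}\|^2$ uniformly in $\vec{\theta}$. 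This uniform boundedness excludes any polynomial growth, forcing every $Q_k$ to be constant and leaving $\hat{C}$ a finite sum of genuine point masses, i.e. $C(\vec{\theta})=\sum_k\beta_k e^{\mathrm{i}\vec{\gamma}_k\cdot\vec{\theta}}$. I regard this boundedness input as the crux, since it is precisely what the distributional Fourier hypothesis does not supply on its own.
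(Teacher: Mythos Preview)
Your approach is the same as the paper's: for the sufficient direction, equally spaced shifts give a Vandermonde system for the coefficients; for the necessary direction, Fourier-transforming the shift identity yields $(\mathrm{i}\omega - P(\omega))\hat{C}(\omega)=0$ with $P$ a finite exponential sum, and the zeros of the analytic multiplier confine the support of $\hat{C}$. The paper treats only $l=1$ and declares the multivariable case ``straightforward,'' so your slab/intersection discussion is already more than the paper supplies.

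You are in fact more careful on two points the paper glosses over. First, you make explicit that the bound $|P(\omega)|\le\sum|c_\Delta|$ is what upgrades ``isolated zeros'' to ``finitely many zeros''; the paper just writes ``by identity theorem it has finite zero points.'' Second, and more substantively, you correctly note that a distribution supported on finitely many points can a priori contain derivatives of $\delta$, so that functions like $\theta\,e^{\mathrm{i}\gamma\theta}$ \emph{do} admit a finite parameter-shift rule yet are not finite trigonometric sums---a genuine counterexample to the ``only if'' direction under the bare Fourier hypothesis. The paper simply jumps to $\hat{C}(\omega)=\sum_k\beta_k\delta(\omega-\gamma_k)$ without addressing this; your appeal to the uniform bound $C(\vec{\theta})\le\|\mathcal{L}\|^2$ coming from Eq.~\eqref{eq:def_costfunction} to exclude polynomial growth is the right repair, and it is worth flagging that the lemma as stated needs some such boundedness assumption.
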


\begin{proof}
	For simplicity we prove this lemma for single-variable cost function $C$ where $l=1$ and $\theta $ is a scalar. The case of multi-variables is straightforward. We first design a parameter shift rule for $C=\sum _{1\leqslant k\leqslant L} \beta _{k} e^{\mathrm{i}\gamma _{k} \theta }$. It is easy to check that Vandermonde matrix $\Delta _{L}\left( e^{\mathrm{i}\alpha \gamma _{1}} ,e^{\mathrm{i}\alpha \gamma _{2}} ,\dotsc ,e^{\mathrm{i}\alpha \gamma _{L}}\right)$ transforms $\left( \beta _{1} e^{\mathrm{i}\gamma _{1} \theta } ,\beta _{2} e^{\mathrm{i}\gamma _{2} \theta }, \dotsc ,\beta _{L} e^{\mathrm{i}\gamma_{L} \theta }\right)^{T}$ to $( C( \theta ), C( \theta +\alpha ) ,\dotsc ,C( \theta +( L-1) \alpha ))^{T}$:
	\begin{equation}
		\begin{pmatrix}
			C( \theta )         \\
			C( \theta +\alpha ) \\
			\vdots              \\
			C( \theta +( L-1) \alpha )
		\end{pmatrix} =
		\Delta_L \cdot
		\begin{pmatrix}
			\beta _{1} e^{\mathrm{i}\gamma _{1} \theta } \\
			\beta _{2} e^{\mathrm{i}\gamma _{2} \theta } \\
			\vdots                              \\
			\beta _{L} e^{\mathrm{i}\gamma _{L} \theta }
		\end{pmatrix}
	\end{equation}
	where
	\begin{equation}
		\Delta_L :=
		\begin{pmatrix}
			1                              & 1                              & \cdots & 1                              \\
			e^{\mathrm{i}\alpha \gamma _{1}}        & e^{\mathrm{i}\alpha \gamma _{2}}        & \cdots & e^{\mathrm{i}\alpha \gamma _{L}}        \\
			\vdots                         & \vdots                         & \ddots & \vdots                         \\
			e^{\mathrm{i}( L-1) \alpha \gamma _{1}} & e^{\mathrm{i}( L-1) \alpha \gamma _{2}} & \cdots & e^{\mathrm{i}( L-1) \alpha \gamma _{L}}
		\end{pmatrix}
	\end{equation}
	Since $\gamma _{p} \neq \gamma _{q}$ for each pair of $p\neq q$, there exists $\alpha \in \mathbb{R}$ such that $e^{\mathrm{i}\alpha \gamma _{p}} \neq e^{\mathrm{i}\alpha \gamma _{q}}$ for each pair of $p\neq q$. In this case, the Vandermonde matrix $\Delta _{L}$ is invertible and it follows immediately that
	\begin{equation}
		\begin{aligned}
			C'( \theta )
			 & =\sum _{1\leqslant k\leqslant L} \mathrm{i}\gamma _{k} \beta _{k} e^{\mathrm{i}\gamma _{k} \theta } \\
			 & =\mathrm{i}\begin{pmatrix}
				     \gamma _{1} & \gamma _{2} & \cdots & \gamma _{L}
			     \end{pmatrix} \cdot \Delta _{L}^{-1} \cdot
			\begin{pmatrix}
				C( \theta )         \\
				C( \theta +\alpha ) \\
				\vdots              \\
				C( \theta +( L-1) \alpha )
			\end{pmatrix}
		\end{aligned}
	\end{equation}
	Note that $\gamma _{k} ,\alpha $ are independent of $\theta $.

	For the necessary side, we use the properties of Fourier transform. Performing Fourier transform on both sides of (\ref{eq:param_shift}) gives
	\begin{equation}
		\mathrm{i}\omega \hat{C} (\omega )=\sum _{\Delta \in S} c_{\Delta } e^{-\mathrm{i} \omega \Delta }\hat{C} (\omega ),\quad \omega \in \mathbb{R}
	\end{equation}
	Notice that the function $f( \omega ) =i\omega -\sum _{\Delta \in S} c _{\Delta } e^{-\mathrm{i}\omega \Delta }$ is analytic and non-constant, thus by identity theorem it has finite zero points. Suppose the number of zero points of $f$ is $L^\prime$. Then the fact that $f( \omega )\hat{C} (\omega )=0$ always holds implies that $\hat{C} (\omega )\neq 0$ at $L\leq L^\prime$ points, i.e., there exists $\gamma _{k} $ and $\beta _{k}$ $(1\leqslant k\leqslant L)$, such that
	\begin{equation}
		\hat{C} (\omega )=\sum _{1\leqslant k\leqslant L} \beta _{k} \delta (\omega -\gamma _{k} ).
	\end{equation}
	Perform the inverse Fourier transform we get (\ref{eq:trigon_expand}).
\end{proof}

\begin{proposition}[Parameter shift rule for $C$]
	Suppose a parameterized ansatz consists of Pauli rotations and unparameterized gates, 
 \begin{equation}
 \begin{aligned}
 U_D(\vec{\theta}_{D}) &=W_{0}\left[\prod _{j}\exp\left( -\mathrm{i}\frac{\theta _{j}}{2} P_{j}\right) W_{j}\right] \\ 
 U_V(\vec{\theta}_{V}) &=W'_{0}\left[\prod _{k}\exp\left( -i\frac{\theta_{k}}{2} P_{k}\right) W'_{k}\right], 
 \end{aligned}
 \end{equation}
 where $P_j$ is a Pauli operator $P_j\in \{I, \sigma_x, \sigma_y, \sigma_z\}$, $W_j$ is an unparameterized gate, and $\theta _{j}$ are mutually independent for different $j$. Then
	\begin{equation}\label{eq:parametershift}
		\begin{aligned}
			\partial _{\theta _{j}} C(\vec{\theta })
			 & =
			\left[ C\left(\vec{\theta } +\frac{\pi }{4}\vec{e}_{j}\right) -C\left(\vec{\theta } -\frac{\pi }{4}\vec{e }_{j}\right)\right]                         \\
			 & -\frac{\sqrt{2} -1}{2}\left[ C\left(\vec{\theta } +\frac{\pi }{2}\vec{e}_{j}\right) -C\left(\vec{\theta } -\frac{\pi }{2}\vec{e}_{j}\right)\right]
		\end{aligned}
	\end{equation}
	where $\vec{e}_{j}$ is the $j$th unit vector.
\end{proposition}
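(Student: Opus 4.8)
The plan is to reduce the multivariate statement to a single-variable analysis in each $\theta_j$ (holding all other parameters fixed) and then invoke Lemma~\ref{lem:param_shift_iff}. The key preliminary step is to establish that the prepared state $\rho(\theta_j)$ is a trigonometric polynomial in $\theta_j$ whose frequency support lies in $\{-1,0,1\}$. Write the relevant Pauli rotation as $R(\theta_j)=\exp(-\mathrm{i}\tfrac{\theta_j}{2}P_j)=\cos(\theta_j/2)I-\mathrm{i}\sin(\theta_j/2)P_j$, using $P_j^2=I$. If $\theta_j$ sits in $U_V$, then the post-measurement ensemble $\sum_b p_b\ket{b}\bra{b}$ is $\theta_j$-independent, so the entire dependence of $\rho$ on $\theta_j$ enters through a single conjugation $R(\theta_j)(\cdot)R(\theta_j)^\dag$; expanding this conjugation yields only the terms $\cos^2(\theta_j/2)$, $\sin^2(\theta_j/2)$, and $\cos(\theta_j/2)\sin(\theta_j/2)$, i.e. only $1$, $\cos\theta_j$, $\sin\theta_j$, hence frequencies in $\{-1,0,1\}$. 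If instead $\theta_j$ sits in $U_D$, then each amplitude $\lambda_b=\bra{b}U_D\ket{0}^{\otimes n}$ has the form $a_b\cos(\theta_j/2)+b_b\sin(\theta_j/2)$, and the measurement weights $p_b=|\lambda_b|^2$ expand once again into $1$, $\cos\theta_j$, $\sin\theta_j$; so $\rho(\theta_j)$ has the same frequency support in this branch too.

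Given this, I would write $\rho(\theta_j)=\rho_0+\rho_+e^{\mathrm{i}\theta_j}+\rho_-e^{-\mathrm{i}\theta_j}$ with $\theta_j$-independent matrices $\rho_0,\rho_\pm$. Since $\mathcal{L}$ is linear in $\rho$ and $C=||\mathcal{L}(\rho)||_F^2=\mathrm{tr}(\mathcal{L}(\rho)^\dag\mathcal{L}(\rho))$ is quadratic in $\rho$, the product of two frequency-$\{-1,0,1\}$ objects has frequency support in $\{-2,-1,0,1,2\}$. Thus $C(\theta_j)=\sum_{m=-2}^{2}c_m e^{\mathrm{i}m\theta_j}$ is a finite trigonometric polynomial with at most $L=5$ frequencies $\gamma_k\in\{-2,-1,0,1,2\}$, which is precisely the hypothesis of Lemma~\ref{lem:param_shift_iff}. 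This already guarantees the existence of a parameter shift rule; what remains is to pin down the specific shifts and coefficients claimed in Eq.~\eqref{eq:parametershift}.

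For the explicit rule I would use the identity $C(\theta+s)-C(\theta-s)=\sum_m 2\mathrm{i}\,c_m\sin(ms)\,e^{\mathrm{i}m\theta}$ and match it against $C'(\theta)=\sum_m \mathrm{i}m\,c_m e^{\mathrm{i}m\theta}$. Positing $C'(\theta)=a[C(\theta+\tfrac{\pi}{4})-C(\theta-\tfrac{\pi}{4})]+b[C(\theta+\tfrac{\pi}{2})-C(\theta-\tfrac{\pi}{2})]$ gives, frequency by frequency, the scalar requirement $2a\sin(m\pi/4)+2b\sin(m\pi/2)=m$. The $m=2$ equation forces $a=1$, and the $m=1$ equation then yields $b=-\tfrac{\sqrt{2}-1}{2}$; the frequencies $m\in\{-2,-1,0\}$ hold automatically because both sides are odd in $m$, which recovers exactly Eq.~\eqref{eq:parametershift}.

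I expect the main obstacle to be the careful bookkeeping of the frequency content in the $U_D$ branch: unlike the clean conjugation arising in $U_V$, here the $\theta_j$-dependence enters through the measurement probabilities $p_b=|\lambda_b|^2$, and one must check that squaring the amplitude does not introduce half-integer or higher harmonics beyond $\{-1,0,1\}$. Once the frequency cap $\{-2,-1,0,1,2\}$ is secured for $C$, the remainder is the routine linear algebra above, and the multivariate statement follows by applying the single-variable argument to each $\theta_j$ with all other parameters held fixed.
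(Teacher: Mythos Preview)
Your proposal is correct and follows essentially the same route as the paper. Both arguments first use the eigendecomposition of the Pauli rotation to show that $\rho(\theta_j)$ carries only frequencies $\{-1,0,1\}$, hence $C$ carries only $\{-2,-1,0,1,2\}$, and then solve for shift coefficients; the paper simply asserts the frequency bound (``it is easy to check'') whereas you spell out the $U_D$ versus $U_V$ case distinction, which is a genuine service since the intermediate measurement makes the $U_D$ branch nontrivial. For the explicit coefficients the paper inverts the full $5\times5$ Vandermonde $\Delta=(e^{\mathrm{i}\alpha_j\gamma_k})$ at nodes $\vec\alpha=\tfrac{\pi}{4}(-2,-1,0,1,2)$ and reads off the row vector $\mathrm{i}\vec\gamma\,\Delta^{-1}$, while you exploit the oddness of $C'$ and of the symmetric differences to reduce to a $2\times2$ system in $(a,b)$ directly; your route is more economical but the content is the same.
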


\begin{proof}
	Recall that
	\begin{equation}
		\begin{aligned}
			\rho (\vec{\theta }) & =U_V(\vec{\theta }_{V})\Big(\sum _{b} \bra{b}U_D(\vec{\theta }_{D}) \op{0}U_D^{\dagger }(\vec{\theta }_{D}) \ket{b}\op{b}\Big) U_V^{\dagger }(\vec{\theta }_{V}) , \\
			C(\vec{\theta })     & =\mathrm{tr}\left(\mathcal{L( \rho (\vec{\theta })) L}( \rho (\vec{\theta }))^{\dagger }\right) .
		\end{aligned}
	\end{equation}
	Since each parameterized gates are Pauli rotations with eigenvalues $\pm 1$, such gates admit the following decomposition:
	\begin{equation}
		\exp\left( -\mathrm{i}\frac{\theta _{j}}{2} P_{j}\right) =e^{\mathrm{i}\frac{\theta_j }{2}} A+e^{-\mathrm{i}\frac{\theta_j }{2}} B,
	\end{equation}
	for some unparameterized matrices $A$ and $B$ .
	It is easy to check that there exist constants $\beta_k$, such that
	\begin{equation}
		C( \vec{\theta} ) =\sum _{1\leqslant k\leqslant L} \beta _{k} e^{\mathrm{i}\vec{\gamma}_k \cdot \vec{\theta} },\quad \vec{\gamma}_k\in \{0,\pm 1,\pm 2\}^l.
	\end{equation}
	Similiar to the proof of the sufficient side of Lemma \ref{lem:param_shift_iff}, we have
	\begin{align*}
		\partial _{\theta _{j}} C
		 & =\mathrm{i}\begin{pmatrix}
			     \gamma _{1} & \dotsc & \gamma _{5}
		     \end{pmatrix}
		\cdot \Delta ^{-1} \cdot
		\begin{pmatrix}
			C(\vec{\theta } +\alpha _{1}\vec{e}_{j}) \\
			\vdots                                   \\
			C(\vec{\theta } +\alpha _{5}\vec{e}_{j})
		\end{pmatrix}                           \\
		 & = \left[ C\left(\vec{\theta } +\frac{\pi }{4}\vec{e}_{j}\right)
		-C\left(\vec{\theta } -\frac{\pi }{4}\vec{e }_{j}\right)\right]    \\
		 & -\frac{\sqrt{2} -1}{2}\left[
			C\left(\vec{\theta } +\frac{\pi }{2}\vec{e}_{j}\right)
			-C\left(\vec{\theta } -\frac{\pi }{2}\vec{e}_{j}\right)\right]
	\end{align*}
	where $\vec{\gamma } =( -2,-1,0,1,2) ,\vec{\alpha } =\frac{\pi }{4}\vec{\gamma } ,\Delta =\left( e^{\mathrm{i}\alpha _{j} \gamma _{k}}\right)_{jk}$.
\end{proof}

\subsection{Algorithm}
We summarize the hybrid algorithm in Algorithm~\ref{alg:steadystate}. In the quantum part, random measurement is performed to obtain classical shadows, based on which we can estimate the cost function and its gradient. In the classical part, gradient-based algorithms such as gradient descent and BFGS \cite{fletcher2000practical} can be applied to update the ansatz parameters. After the classical optimization terminates, the ansatz parameters are denoted as $\vec{\theta}^*$. Then the quantum state $\rho(\vec{\theta}^*)$ is the steady state we are searching for. As mentioned above, to estimate the expectation of an arbitrary observable on the steady state $\mathrm{tr}(O\rho(\vec{\theta}^*))$, we can apply the classical shadows of $\rho(\vec{\theta}^*)$ obtained in the optimization process.
\begin{algorithm}[hbt] \caption{Search for steady state} \label{alg:steadystate}
	\KwIn{initial point $\vec{\theta}^{(0)}$; termination parameter $\epsilon$; initial value of the cost function $C(\vec{\theta}^{(0)}) = C_0 \gg \epsilon$}

	\textbf{While} $C(\vec{\theta}) \geq \epsilon$ \textbf{do} \\
	\quad update $\vec{\theta}$ by some gradient-based classical algorithm \\
	\quad \textbf{for} $i=1:N$ \textbf{do} \\
	\quad\quad Randomly choose a unitary $U_i$ from an ensemble $\mathcal{U}$ and record it, then perform the unitary.\\
	\quad\quad  \textbf{for} $j=1:M$ \textbf{do}  \\
	\quad\quad\quad Measure the output state with $\sigma_z^{\otimes n}$. Record the outcome as $\ket{b_i^j}$. \\
	\quad\quad \textbf{end for} \\
	\quad\quad Take the average of the $M$ measurement results
	\begin{equation}\label{eq:classicalshadow}
		\hat{\rho}_i = \frac{1}{M}\sum_{j=1}^M \mathcal{M}^{-1}\left(U_i^\dag \ket{b_i^j}\bra{b_i^j}  U_i\right).
	\end{equation}
	\quad \textbf{end for} \\
	\quad Get the shadow set $\{\hat{\rho}_1,\hat{\rho}_2,\dotsc, \hat{\rho}_N\}$. Estimate the cost function $C(\vec{\theta})$ by Eq.~\eqref{shadow_estimation}. Calculate the gradient of the cost function by Eq.~\eqref{eq:parametershift} \\
	\textbf{end While}   \\
\end{algorithm}

\section{Example}
\subsection{Ising Model}
We consider the Ising model with transverse-field, whose Hamiltonian is given by
\begin{equation}
	H = \frac{1}{2}\sum_i \sigma_i^z \sigma_{i+1}^z + g \sum_i \sigma_i^x,
\end{equation}
where $\sigma_i^z$ is the Pauli operator of the $i$-th spin and $g$ is the amplitude of the transverse field. The jump operators are
\begin{equation}
	\begin{aligned}
		c_{1,i} & = \sigma_i^-  \\
		c_{2,i} & = \sigma_i^z,
	\end{aligned}
\end{equation}
which characterize the damping and dephasing effect.
In our simulation, we choose the particle number $n=2$ with the dissipation strengths $\gamma_1 =1$ and $\gamma_2 = 0.5$. We set the repetition numbers $D_1=D_2=4$. The simulation results are shown in Fig.~\ref{fig:Ising}. To characterize the accuracy of the algorithm, we use infidelity defined as $1-(\mathrm{tr}(\sqrt{\rho_{\mathrm{e}}}\rho(\vec{\theta^*})\sqrt{\rho_{\mathrm{e}}})^{1/2})^2$, where $\rho_{\mathrm{e}}$ is the exact solution. We can see that the infidelity between $\rho(\vec{\theta^*})$ and $\rho_{\mathrm{e}}$ is always lower than $10^{-4}$. Such low infidelity implies a high accuracy of our algorithm. We also plot how the expectations of joint observables $XY$, $YY$ and $ZZ$ vary with the amplitude of the transverse field $g$.

\begin{figure}
	\centering
	\includegraphics[width=0.5\textwidth]{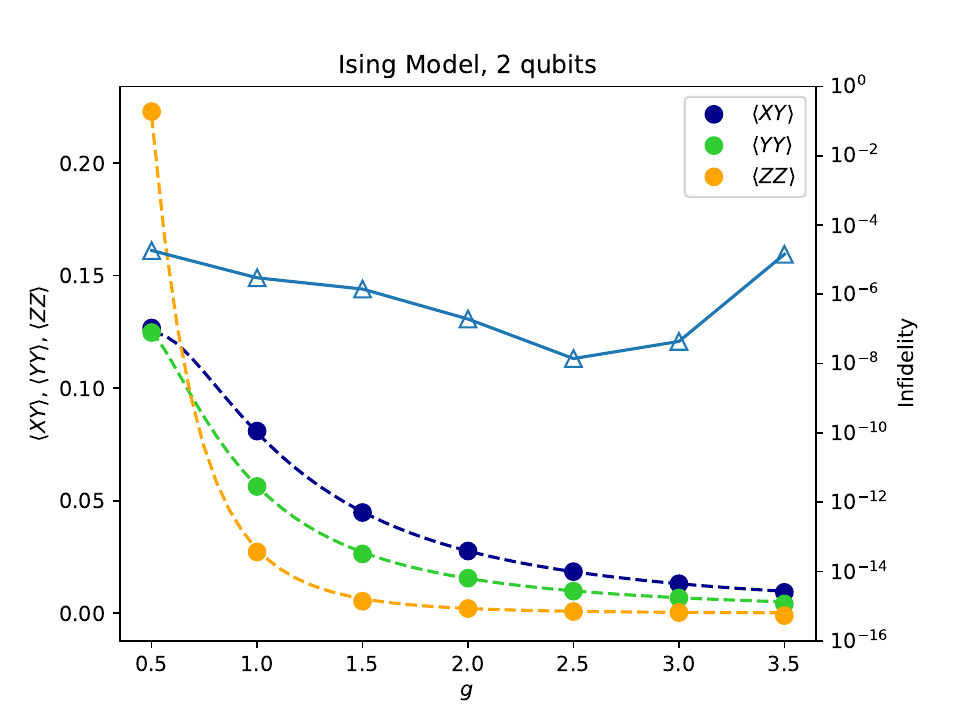}
	\caption{Simulation results for the dissipative Ising model.}
	\label{fig:Ising}
\end{figure}

\begin{figure}
	\centering
	\includegraphics[width=0.5\textwidth]{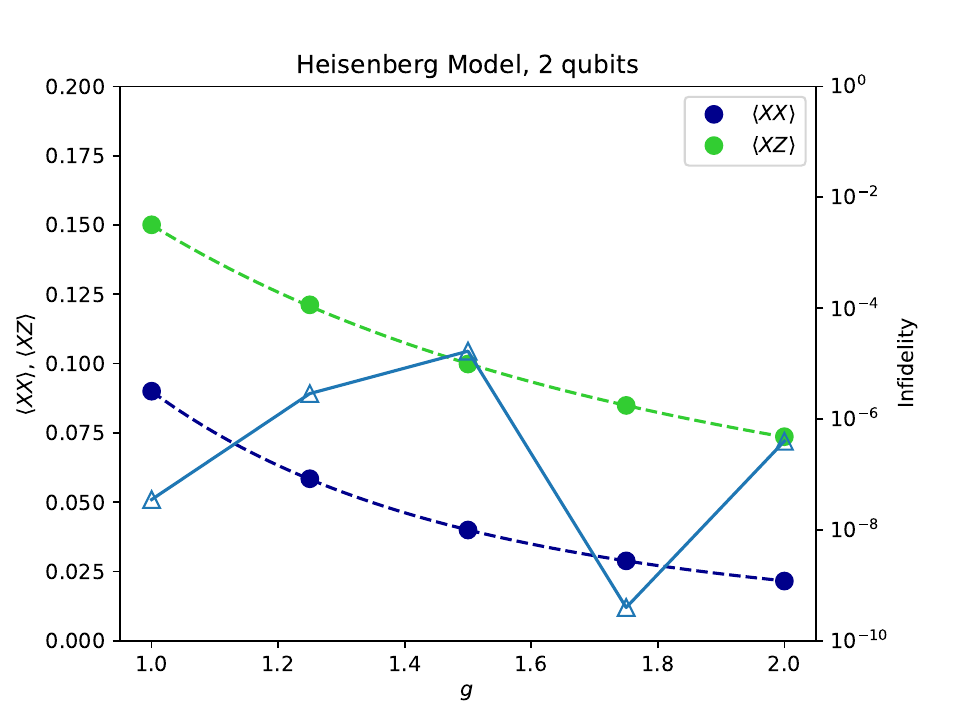}
	\caption{Simulation results for the dissipative Heisenberg model.}
	\label{fig:Heisenberg}
\end{figure}

\subsection{Heisenberg Model}
We also consider the Heisenberg model with a transverse-field in one direction, whose Hamiltonian is given by
\begin{equation}
	H = \sum_i \left(\sigma_i^z \sigma_{i+1}^z + \sigma_i^y \sigma_{i+1}^y + \sigma_i^x \sigma_{i+1}^x\right) + g \sum_i \sigma_i^x
\end{equation}
The jump operators are
\begin{equation}
	\begin{aligned}
		c_{1,i} & = \sigma_{i,z}^- =  \frac{1}{2}(\sigma_x - i\sigma_y)   \\
		c_{2,i} & = \sigma_{i,y}^- =  \frac{1}{2}(\sigma_x - i\sigma_z)   \\
		c_{3,i} & = \sigma_{i,x}^- =  \frac{1}{2}(-\sigma_z - i\sigma_y), \\
	\end{aligned}
\end{equation}
which characterize damping in three directions.

In our simulation, we choose the particle number $n=2$ with the dissipation strengths $\gamma_1 =1$ and set $D_1=D_2=2$. The simulation results are shown in Fig.~\ref{fig:Heisenberg}. We can see that the infideity is also lower than $10^{-4}$. The expectations of $XX$ and $XZ$ varying with $g$ are also plotted.

\section{conclusion}
In conclusion, we have presented a novel variational algorithm for searching non-equilibrium steady states in open quantum systems. By directly simulating the operator-sum form of the Lindblad equation, we successfully reduce the required number of qubits by half compared to previous approaches that utilize vectorization or purification. Through random measurements and parameter-shift rule, we are able to estimate the quadratic cost function and its gradient.

For future research, we can explore the extension of our algorithm to larger and more complex quantum systems, as well as investigate its applicability in other dissipative models. The random measurement technique may also be exploited to estimate overlaps between parameterized quantum states, which is the crucial step of the variational real and imaginary time simulation of open quantum systems.

\emph{Note added} When preparing this manuscript, we notice another related work \cite{PhysRevLett.130.240601} published very recently, where the authors also consider the operator-sum form of the Lindblad equation and prepare a mixed state. The difference is that the mixed state is a mixture over heuristically chosen ansatz states without parameterization. The performance of their algorithm relies on the choice of the ansatz states.

\section*{Acknowledgement}
This work was supported in part by the National Natural Science Foundation of China Grants No. 61832003, 62272441, 12204489, and the Strategic Priority Research Program of Chinese Academy of Sciences Grant No. XDB28000000.

\onecolumngrid
\appendix

\section{Proof of the unbiased estimator}\label{app:ube}
In this section we prove that Eq.~\eqref{eq:ube} is an unbiased estimator for $\mathrm{tr}(O_1\rho O_2 \rho)$. Recall that the classical shadow is given by
\begin{equation}
	\hat{\rho}_i = \frac{1}{M}\sum_{j=1}^M \hat{\rho}_{i,j}
\end{equation}
where
\begin{equation}
	\hat{\rho}_{i,j} =\mathcal{M}^{-1}\left(U_i^\dag \ket{b_i^j}\bra{b_i^j}  U_i\right)
\end{equation}
and $\mathbb{E}_{U_i,\ket{b_i^j}}(\hat{\rho}_{i,j}) = \rho$.
Then we can calculate the expectation
\begin{equation}
	\begin{aligned}
		\mathbb{E}_{U_i,\ket{b_i^j}, U_{i^\prime}, \ket{b_{i^\prime}^{j^\prime}} } \frac{1}{N(N-1)} \sum_{i \neq {i^\prime}}  \mathrm{tr}(O_1 \hat{\rho}_i O_2 \hat{\rho}_{i^\prime}) & =
		\mathbb{E}_{U_i,\ket{b_i^j}, U_{i^\prime}, \ket{b_{i^\prime}^{j^\prime}} }\left(\frac{1}{N(N-1)} \frac{1}{M^2} \sum_{i,i^\prime\neq i,j,j^\prime} \mathrm{tr}(O_1 \hat{\rho}_{i,j} O_2 \hat{\rho}_{i^\prime,j^\prime})\right)                                           \\
  & = \frac{1}{N(N-1)M^2}\left[\sum_{i,i^\prime\neq i,j,j^\prime} \mathrm{tr}\left(O_1 \mathbb{E}_{U_i,\ket{b_i^j} }(\hat{\rho}_{i,j}) O_2 \mathbb{E}_{U_{i^\prime}, \ket{b_{i^\prime}^{j^\prime}} }(\hat{\rho}_{i^\prime,j^\prime}) \right) \right] \\
	     & = \frac{1}{N(N-1)M^2}\left[N(N-1)M^2 \mathrm{tr}(O_1 \rho O_2 \rho) \right] \\
	        & = \mathrm{tr}(O_1 \rho O_2 \rho).
	\end{aligned}
\end{equation}
The second equation is due to the fact that $\hat{\rho}_{i,j}$ and $\hat{\rho}_{i^\prime,j^\prime}$ are independent random variables.
In the second last equation, the number of summation is $N(N-1)M^2$ for all $i\neq i^\prime$.

\section{Calculating the number of measurements}\label{app:number}

The variance bound of estimation of $\operatorname{tr}( O\rho \otimes \rho )$ by shadow tomography has been analyzed in \cite{Huang2020PredictingMP}. Specifically, the variance is upper bounded by $\frac{4x^{2}}{N^{2}} +\frac{4x}{N}$, where $x=4^{k} \| O\| _{\infty }^{2}$ when using Pauli measurements (where $k$ is the locality of $O$), and $x=\sqrt{9+6/2^{n}}\operatorname{tr}\left( O^{2}\right)$ when using Clifford measurements. In shadow estimation,
\begin{equation*}
	O=\sum _{ij}\left( B_{i}^{\dagger } B_{j} \otimes A_{j}^{\dagger } A_{i}\right) S,
\end{equation*}
where we have written the Lindblad equation in the form of $\mathcal{L} \rho =\sum _{i} A_{i} \rho B_{i}^{\dagger }$. By Chebyshev inequality, in order to achieve additive error $\varepsilon $ with success probability $1-\delta $, the number of measurements should scale as
\begin{equation*}
	N=\Omega \left(\frac{4^{k} \| O\| _{\infty }^{2}}{\varepsilon ^{2} \delta }\right),
\end{equation*}
when using Pauli measurements, and
\begin{equation*}
	N=\Omega \left(\frac{\sqrt{9+6/2^{n}}\operatorname{tr}\left( O^{2}\right)}{\varepsilon ^{2} \delta }\right),
\end{equation*}
when using Clifford measurements.


\section{Error mitigation}\label{app:virtualdistillation}
The error mitigation method is based on shadow distillation \cite{seif2023shadow}. It combines the virtual distillation technique \cite{huggins2021virtual} with classical shadows, which can also reduces the error exponentially with the number of copies without actually preparing the multiple copies of quantum states.
In the shadow distillation, one need to estimate expectations $\mathrm{tr}(O_1 \rho^m O_2 \rho^m)$ and $\mathrm{tr}(\rho^{m}\otimes \rho^m)$,
\begin{equation}\label{eq:highorderexpec}
	\begin{aligned}
		\mathrm{tr}(O_1 \rho^m O_2 \rho^m)  & = \mathrm{tr}\left[\rho^{\otimes 2m} S_{2m} (O_1 \otimes I \otimes \cdots \otimes O_2 \otimes I \otimes \cdots \otimes I)\right] \\
		\mathrm{tr}(\rho^{m}\otimes \rho^m) & = \left[\mathrm{tr}\left(\rho^{\otimes m} S_{m} \right)\right]^2,
	\end{aligned}
\end{equation}
where the operator $S_{2m}$ is a shift operation over $2m$ copies quantum states, i.e., $S_{2m} \ket{b_1b_2\cdots b_{2m} }=\ket{b_2 \cdots b_{2m} b_1}$. Then Eq.~\eqref{eq:highorderexpec} can be estimated by
\begin{equation}
	\begin{aligned}
		\mathbb{E}\left\{\mathrm{tr}\left[\hat{\rho}_1 \otimes \hat{\rho}_2 \otimes \cdots \otimes \hat{\rho}_{2m} S_{2m} (O_1 \otimes I \otimes \cdots \otimes O_2 \otimes I \otimes \cdots \otimes I)\right]\right\} & = \mathrm{tr}(O_1 \rho^m O_2 \rho^m) \\
		\mathbb{E}\left[\mathrm{tr}\left(\hat{\rho}_1 \otimes \hat{\rho}_2 \otimes \cdots \otimes \hat{\rho}_{m} S_{m} \right)\right]                & = \mathrm{tr}(\rho^{m}),
	\end{aligned}
\end{equation}
where $\hat{\rho}_i$ $(i\in\{1,2,\dotsc, 2m\})$ are mutually indepedent.
We assume the quantum state prepared by the circuit is $\rho_{\mathrm{real}}=(1-\epsilon)\rho_{\mathrm{ideal}}+\epsilon \rho_{\mathrm{error}}$ and $\mathrm{tr}(\rho_\mathrm{ideal}\rho_\mathrm{error})=0$.
Then we can calculate the mitigated expectation in the presence of error. We denote $\sigma = \rho \otimes \rho$ and $O =S_2(O_1 \otimes O_2)$, then
\begin{equation}
	\begin{aligned}
		\sigma & = (1-\epsilon)^2 \rho_{\mathrm{ideal}} \otimes \rho_{\mathrm{ideal}} +\epsilon(1-\epsilon)(\rho_{\mathrm{ideal}} \otimes \rho_{\mathrm{error}}+\rho_{\mathrm{error}}\otimes \rho_{\mathrm{ideal}}) + \epsilon^2 \rho_{\mathrm{error}} \otimes \rho_{\mathrm{error}} \\
		       & := (1-\epsilon)^2 \sigma_{\mathrm{ideal}} + [1-(1-\epsilon)^2]\sigma_{\mathrm{error}}                             \\
		       & = (1-\epsilon^\prime) \sigma_{\mathrm{ideal}} + \epsilon^\prime \sigma_{\mathrm{error}},
	\end{aligned}
\end{equation}
and
\begin{equation}
	\begin{aligned}
		  & \frac{\mathrm{tr}(O_1 \rho^m O_2 \rho^m)}{\mathrm{tr}(\rho^{m}\otimes \rho^m)}                                               \\
		= & \frac{\mathrm{tr}(O \sigma^m)}{\mathrm{tr}(\sigma^m)}              \\
		= & \frac{(1-\epsilon^\prime)^m\mathrm{tr}(O\sigma_{\mathrm{ideal}})+\sigma^m \mathrm{tr}(O\sigma_{\mathrm{error}})}{(1-\epsilon^\prime)^m +\epsilon^\prime \mathrm{tr}(\sigma^m_{\mathrm{error}})}                       \\
		= & \mathrm{tr}(O\sigma_{\mathrm{ideal}}) +[\mathrm{tr}(O\sigma^m_{\mathrm{error}})-\mathrm{tr}(O\sigma_{\mathrm{ideal}})\mathrm{tr}(\sigma^m_{\mathrm{error}})](\epsilon^\prime)^m + \mathrm{O}((\epsilon^\prime)^{m+1}) \\
		= & \mathrm{tr}(O_1\rho O_2\rho) +[\mathrm{tr}(O\sigma^m_{\mathrm{error}})-\mathrm{tr}(O\sigma_{\mathrm{ideal}})\mathrm{tr}(\sigma^m_{\mathrm{error}})](\epsilon^\prime)^m + \mathrm{O}((\epsilon^\prime)^{m+1}),
	\end{aligned}
\end{equation}
which means the error can be reduced exponentially with $m$.

\bibliographystyle{apsrev4-1}

\bibliography{opensystemref}
\end{document}